\DeclareMathOperator{\Sym}{S}
\title{On power sum kernels on symmetric groups}
\author{Iskander Azangulov$^1$ \and Viacheslav Borovitskiy$^2$ \and Andrei Smolensky$^1$}
\date{\itshape\small$^1$St. Petersburg State University, $^2$ETH Zürich}
\begin{document}

\maketitle

\begin{abstract}
In this note, we introduce a family of ``power sum'' kernels and the corresponding Gaussian processes on symmetric groups $\Sym_n$.
Such processes are bi-invariant: the action of $\Sym_n$ on itself from both sides does not change their finite-dimensional distributions.
We show that the values of power sum kernels can be efficiently calculated, and we also propose a method enabling approximate sampling of the corresponding Gaussian processes with polynomial computational complexity.
By doing this we provide the tools that are required to use the introduced family of kernels and the respective processes for statistical modeling and machine learning.
\end{abstract}
\section{Introduction}

Gaussian processes on various non-Euclidean domains are of great interest both from the point of view of pure mathematics \cite{taylor2003,istas2005,yaglom1961, cohen2012,cohen2013}, and in terms of various applications where they can be used as statistical models \cite{jaquier2022,hutchinson2021,coveney2020}.

Important are the cases when Gaussian processes and their kernels respect the geometric structure of their domain $X$.
Usually this means \emph{stationarity} in a generalized sense, that is, the invariance of all finite-dimensional distributions to the symmetry group of the space $X$.
For example, if $X$ is a Riemannian manifold, with respect to the isometry group, or, if $X$ is a group, with respect to the action of $X$ on itself.
In this note, we will be interested in the latter case: we will consider symmetric groups $X = \Sym_n$ and \emph{bi-invariant} Gaussian processes with their respective kernels, i.e. invariant under the action of $\Sym_n$ on itself by left and right translations \cite{yaglom1961}.
This, under the assumption that the Gaussian process is centered, is expressed by the relation
\[
k(\sigma g, \sigma h)
=
k(g \sigma, h \sigma)
=
k(g, h)
\]
for its kernel $k: \Sym_n \x \Sym_n \to \R$ and elements $g, h, \sigma \in \Sym_n$.

We define the family of \emph{power sum} kernels generating bi-invariant Gaussian processes on symmetric groups $\Sym_n$.
As far as the authors' knowledge goes, these were not considered before in the literature.
These kernels are defined in terms of the cycle type of the quotient of two permutations, and due to classical relations in representation theory are automatically non-negative definite.
Moreover, it turns out that these kernels can be computed efficiently and it is possible to sample the corresponding Gaussian processes efficiently, which makes these objectives attractive for statistical modeling of functions on $\Sym_n$.

For comparison, the typical covariances considered in the literature are defined as decreasing functions of some distance \cite{bachoc2020}, with different possible definitions of distance (see \cite[Chapter~6B]{diaconis1988}) yielding different covariances functions.
In all these cases, it is required to separately prove their non-negative definiteness.
The covariances we define are not, in general, functions of distance in any standard sense, but they remain monotonic along shortest paths in the Cayley graph of the symmetric group.

\section{Power sum kernels}
We call a finite sequence of natural numbers $\lambda = (\lambda_1,\lambda_2,\ldots,\lambda_\ell)$ with $\lambda_1 \geq \lambda_2 \geq \ldots \geq \lambda_\ell$ and with $\lambda_1 + \ldots + \lambda_\ell = n$ a \emph{partition} of $n$ of \emph{height} $\ell$.

Let $d \in \Z_+$, $m \in \N$, and let $p_d: \C^m \to \C$ denote the Newton's power sum, that is, a symmetric polynomial of the form
\begin{equation}
p_d(z_1, \ldots, z_m)
=
z_1^d + z_2^d + \ldots + z_m^d.
\end{equation}
For a partition $\lambda = (\lambda_1,\lambda_2,\ldots,\lambda_\ell)$ define $p_\lambda = \prod_{j=1}^\ell p_{\lambda_j}$.
Finally, for a permutation $g \in \Sym_n$ define $\mu(g) =(\mu_1, \mu_2, \ldots, \mu_\ell)$, where $\ell$ is the total number of cycles in permutation $g$, and $\mu_j$ is the length of the $j$-th largest cycle.

We will define a new family of stationary kernels on $\Sym_n$, which we will call \emph{power sum kernels} in what follows.
\begin{definition}
For a vector $\v{z}=(z_1,\ldots, z_m)\in \R^m$, where $z_j \geq 0$, set
\begin{equation}
k_{\v{z}}(g,h) = p_{\mu(gh^{-1})}(\v{z}).
\end{equation}
\end{definition}

\begin{theorem} \label{thm:main_thm}
Function $k_{\v{z}}: \Sym_n \x \Sym_n \to \R$ is positive definite and bi-invariant, i.e. $k_{\v{z}}(\sigma g, \sigma h) = k_{\v{z}}(g \sigma, h \sigma) = k_{\v{z}}(g, h)$ for all $\sigma \in \Sym_n$.
Moreover, if the permutation $gh^{-1}$ has cycle type $(c_1, \ldots, c_n)$, i.e. the permutation $gh^{-1}$ has $c_j$ cycles of length $j$ with $j = 1, \ldots, n$, then
\begin{equation} \label{eqn:ker_komp}
k_{\v{z}}(g,h) = \prod_{j=1}^n p_j(z)^{c_j}.
\end{equation}
\end{theorem}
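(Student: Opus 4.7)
The plan is to verify the three assertions of the theorem in sequence, saving positive definiteness for last since it is the only step that requires non-trivial representation theory.

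To see \eqref{eqn:ker_komp}, observe that if $gh^{-1}$ has cycle type $(c_1, \ldots, c_n)$, then by construction $\mu(gh^{-1})$ is simply the partition listing each integer $j$ with multiplicity $c_j$ (in decreasing order), so $p_{\mu(gh^{-1})} = \prod_{j=1}^n p_j^{c_j}$ follows directly from the multiplicative definition $p_\lambda = \prod p_{\lambda_i}$. Bi-invariance reduces to the statement that $\mu$ is constant on conjugacy classes of $\Sym_n$: indeed, $(\sigma g)(\sigma h)^{-1} = \sigma(gh^{-1})\sigma^{-1}$ is conjugate to $gh^{-1}$, hence has the same cycle type, while $(g\sigma)(h\sigma)^{-1} = gh^{-1}$ literally.

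For positive definiteness the key is to expand $p_\mu$ in the basis of irreducible characters of $\Sym_n$. The Frobenius character formula states that
\[
p_\mu(\v{z}) = \sum_{\rho \vdash n} \chi^\rho(\mu) \, s_\rho(\v{z}),
\]
where $\chi^\rho(\mu)$ is the value of the irreducible character indexed by the partition $\rho$ on the conjugacy class of cycle type $\mu$, and $s_\rho$ is the Schur polynomial. Substituting $\mu = \mu(gh^{-1})$ yields
\[
k_{\v{z}}(g, h) = \sum_{\rho \vdash n} s_\rho(\v{z}) \, \chi^\rho(gh^{-1}).
\]
Each summand is positive definite in $(g,h)$: for any $g_1, \ldots, g_k \in \Sym_n$ and $a_1, \ldots, a_k \in \C$, choosing a unitary realization of $\rho$ gives
\[
\sum_{i,j} a_i \bar a_j \chi^\rho(g_i g_j^{-1}) = \operatorname{tr}(A A^*) \geq 0, \qquad A = \sum_i a_i \rho(g_i).
\]
Since Schur polynomials have non-negative integer coefficients in the monomial symmetric function basis, $s_\rho(\v{z}) \geq 0$ whenever $\v{z}$ has non-negative entries, and $k_{\v{z}}$ is thus a non-negative linear combination of positive definite kernels.

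The main obstacle is recognizing the right change of basis in the space of class functions on $\Sym_n$: power sums are not themselves characters, so positive definiteness is not apparent from the definition. The Frobenius formula supplies precisely the decomposition needed, and its use also explains the hypothesis $z_j \geq 0$, which is what guarantees that the coefficients in the character expansion are non-negative.
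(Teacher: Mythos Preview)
Your proof is correct and follows essentially the same route as the paper: the Frobenius character formula is used to expand $p_{\mu(gh^{-1})}$ in Schur polynomials, and non-negativity of $s_\rho(\v{z})$ for $\v{z}\geq 0$ then shows the kernel is a non-negative combination of the positive definite kernels $\chi^\rho(gh^{-1})$. The only cosmetic differences are that you verify bi-invariance directly from conjugacy (the paper obtains it from the character expansion) and that you prove positive definiteness of $(g,h)\mapsto \chi^\rho(gh^{-1})$ by hand via $\operatorname{tr}(AA^*)\geq 0$ rather than citing the general form~\eqref{eqn:stationary_kernels}; neither changes the substance of the argument.
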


The proof of \Cref{thm:main_thm} will be given in~\Cref{sec:main_thm_proof}.
For now we will focus on simple properties and examples of kernels from this family.

\subsection{Properties and examples of power sum kernels}

First, note that~\Cref{eqn:ker_komp} does obviously make it possible to compute the values of power kernels $k_{\v{z}}(g,h)$ efficiently.

Since $k_{\v{z}}(g,h)$, as a function of $z$, is a homogeneous polynomial of degree~$n$, we have $k_{r\v{z}}(g,h) = r^n k_{\v{z}}(g,h)$.
In case $\sum_{j=1}^m z_j = 1$ one gets that $k_{\v{z}}(g,g)=(\sum_{j=1}^m z_j)^n = 1$, which implies that the variance $k_{\v{z}}(g, g)$ equals $|\v{z}|_{1}^n$.

In what follows, without loss of generality, we assume that $k_{\v{z}}(g, g) \equiv 1$.
With this normalization, it can be seen that the constructed covariance multiplicatively penalizes each cycle of length $j$, multiplying the result by $z_1^j + \ldots + z_m^j$.

Consider the Cayley graph of the group $\Sym_n$ with respect to the set of all transpositions (permutations that swap only two elements).
This is a graph $G = (V, E)$ whose vertex set $V$ coincides with $\Sym_n$, and the set of edges $E$ consists of those pairs $(g, h)$ for which the permutation $g h^{-1}$ is a transposition.

The distance between permutations $g$ and $h$ in this graph (commonly called the Cayley distance) can be computed as
\begin{equation} \label{eqn:caley_dist}
d_C(g,h) = n-|\operatorname{cycles}(g h^{-1})|,
\end{equation}
where $\operatorname{cycles}(g)$ denotes the set of cycles of $g$, including cycles of length $1$.
Note that $|\operatorname{cycles}(g h^{-1})| = \ell$, where $\ell$ is the total number of cycles of $gh^{-1}$ from the definition of $\mu(g h^{-1})$.

Let us introduce the following order relation on $\Sym_n$: we say that $h \prec g$ if there is a shortest path in the Cayley graph from vertex $e$ (the identity permutation) to vertex $g$ passing through vertex $h$.
\begin{proposition}
If $h \prec g$ and $\v{z} \in \R^m$ for $m > 1$ then $k_{\v{z}}(h,e) \geq k_{\v{z}}(g, e)$.
If $z_j > 0$ for at least two distinct indices $j$, then $k_{\v{z}}(h,e) > k_{\v{z}}(g, e)$.
\end{proposition}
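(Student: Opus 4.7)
My plan is to reduce the statement to a single step along the Cayley graph and then to a one-line elementary inequality. By definition of $h \prec g$ there exists a shortest path $e = \sigma_0, \sigma_1, \ldots, \sigma_k = g$ in the Cayley graph (each step being left multiplication by a transposition) with $h = \sigma_j$ for some $j \leq k$. By telescoping, it suffices to show that the kernel is non-increasing along every edge of this path, i.e.\ that for any $g \in \Sym_n$ and any transposition $t$ with $d_C(e, t g) = d_C(e, g) + 1$ one has $k_{\v{z}}(g, e) \geq k_{\v{z}}(t g, e)$.

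Next I would invoke the classical fact that left multiplication by a transposition $t$ either splits one cycle of $g$ into two or merges two cycles of $g$ into one. By \Cref{eqn:caley_dist} the first case decreases and the second increases $d_C(e, \cdot)$, so in our situation $t$ must merge two cycles of $g$, of lengths $a$ and $b$ say, into a single cycle of length $a+b$ in $t g$; all other cycles are unaffected. Factoring out the common contribution $R$ of the unaffected cycles (a product of values $p_j(\v{z})$, hence nonnegative), \Cref{thm:main_thm} yields
\[
k_{\v{z}}(g, e) - k_{\v{z}}(t g, e) = R \cdot \bigl(p_a(\v{z}) p_b(\v{z}) - p_{a+b}(\v{z})\bigr).
\]

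It then remains to establish the elementary inequality $p_a(\v{z}) p_b(\v{z}) \geq p_{a+b}(\v{z})$ for $\v{z}$ with nonnegative entries, which follows at once from the expansion
\[
p_a(\v{z}) p_b(\v{z}) - p_{a+b}(\v{z}) = \sum_{i \neq j} z_i^{a} z_j^{b} \geq 0.
\]
For the strict part, I would assume that at least two coordinates $z_i, z_j$ are strictly positive; then the cross-term $z_i^a z_j^b$ alone makes this sum positive, while $R > 0$ because each factor $p_c(\v{z})$ is bounded below by $z_i^c > 0$. Applying the displayed identity along any single edge of the path from $h$ to $g$ (which exists as soon as $h \neq g$) then yields strict inequality. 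I expect no serious obstacle; the only point that requires a bit of care is the cycle-type bookkeeping when $a = b$ (the count $c_a$ drops by $2$ while $c_{2a}$ grows by $1$), but the displayed factorisation remains valid in that case as well.
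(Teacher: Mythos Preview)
Your argument is correct and follows essentially the same route as the paper: reduce to a single edge of the shortest path and compare $p_a(\v{z})\,p_b(\v{z})$ with $p_{a+b}(\v{z})$ via the expansion of the product, obtaining the nonnegative (resp.\ strictly positive) cross-terms $\sum_{i\neq j} z_i^{a} z_j^{b}$. The only cosmetic differences are that the paper orients the step from $g$ towards $e$ (splitting a cycle) and writes the comparison as a ratio rather than a difference; your additive formulation with the common factor $R$ is arguably a bit cleaner since it avoids dividing by $p_b(\v{z})p_c(\v{z})$ when $\v{z}=0$.
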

\begin{proof}
For each vertex of the Cayley graph along the shortest path from~$g$ to~$e$, the number of cycles increases by $1$ by formula~\eqref{eqn:caley_dist}.
Hence, it suffices to prove that $k_{\v{z}}(g,e) \leq k_{\v{z}}(g\sigma, e)$, where $\sigma$ is a permutation that increases the number of cycles by $1$.
Let the permutation split some cycle of length $a$ into two cycles of lengths $b$ and $c$ (so that $a = b+c$). Then
\begin{equation}
k_{\v{z}}(g, e)
=
k_{\v{z}}(g \sigma, e)
\frac{(z_1^a + z_2^ a + \ldots + z_m^a)}
{(z_1^b + z_2^b + \ldots + z_m^b)(z_1^c + z_2^c + \ldots + z_m^c)}
\leq
k_{\v{z}}(g \sigma, e),
\end{equation}
since expanding the product in the denominator produces the sum of the numerator and terms of the form $z_j^b z_l^c \geq 0$.
If there are distinct indices $j, l$ for which $z_j > 0$ and $z_l > 0$, then $z_j^b z_l^c > 0$ and the strict inequality holds.
\end{proof}

Substituting different values of $z$, we can see that the family of power kernels includes the following Gaussian processes as ``extreme'' cases.
\begin{itemize}
\item For $\v{z}=(1) \in \R^1$ one gets 
\begin{equation}
k_{\v{z}}(g, h) =\prod_{j=1}^{n} 1^{c_j} \equiv 1.
\end{equation}
Therefore, if $f$ is a centered Gaussian process with covariance $k_{\v{z}}$, then $f(g) = f(e)$, $g \in \Sym_n$, where $f(e) \sim \f{N}(0,1) $.
\item For $\v{z}=(1, \ldots, 1) \in \R^m$ one gets
\begin{equation} \label{eqn:zonesvector}
k_{\v{z}}(g,h)
=
\prod_{j=1}^\ell
\Big(\sum_{l=1}^m 1^{\mu_j(gh^{-1})}\Big) = m^\ell
\end{equation}
where, by the definition of $\mu$, the number $\ell$ is the total number of cycles of the permutation $g h^{-1}$. 
In particular, in this case, the covariance is a function of the Cayley distance.
Note that the frequently used \emph{Cayley distance kernels}, given by $\exp(-\beta\cdot d_C(g,h))$, are positive semidefinite under the sufficient condition $\exp(\beta)\in\{ 1,\ldots,n-1 \} \cup [n,\infty)$ \cite{corfield2021}, and~\Cref{eqn:zonesvector} corresponds precisely to the Cayley distance kernel with $\log$-integer ``inverse temperature'' $\beta$.
\item For $\v{z}=(1/m, \ldots, 1/m) \in \R^m$, using homogeneity, it is easy to see that for any permutation other than the identity the following inequality holds:
\begin{equation}
k_{\v{z}}(g,e) = m^{k-n} \le m^{-1}.
\end{equation}
This means that for $m\to\infty$ the covariance $k_{\v{z}}$ converges to delta function, and the process itself converges to the Gaussian white noise on $\Sym_n$.
\end{itemize}

Some of the power sum kernels are illustrate on~\Cref{fig:s6-kernels}.

\begin{figure}
    \centering
    \includegraphics[scale=0.5]{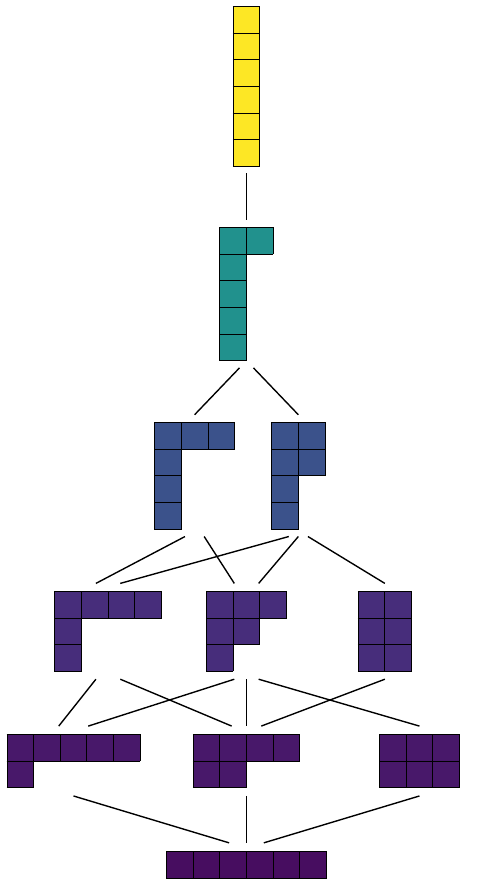}%
    \hfill
    \includegraphics[scale=0.5]{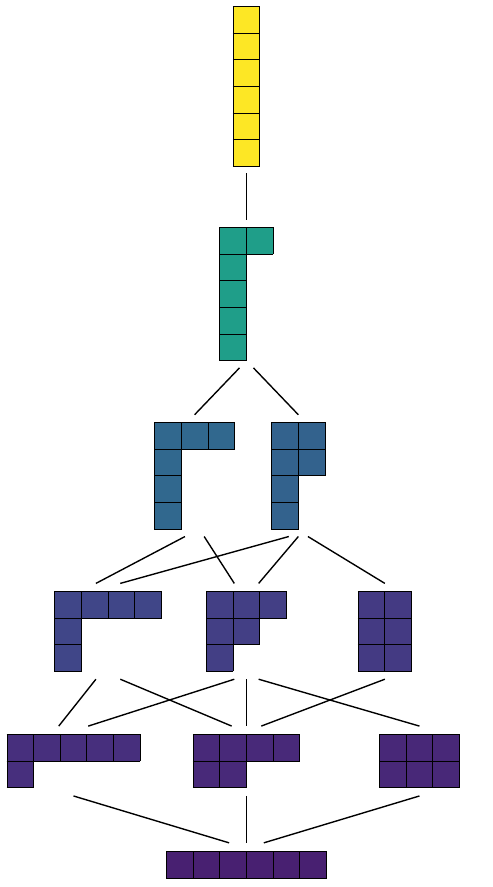}%
    \hfill
    \includegraphics[scale=0.5]{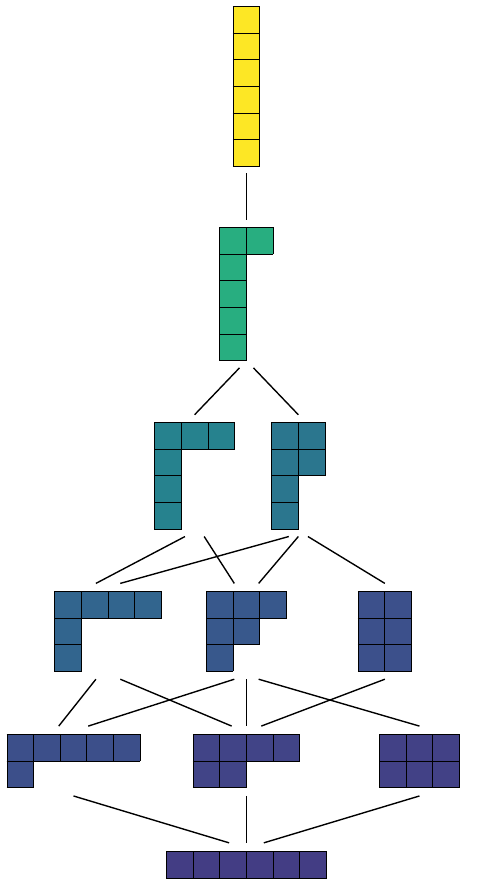}%
    \hfill
    \includegraphics[scale=0.5]{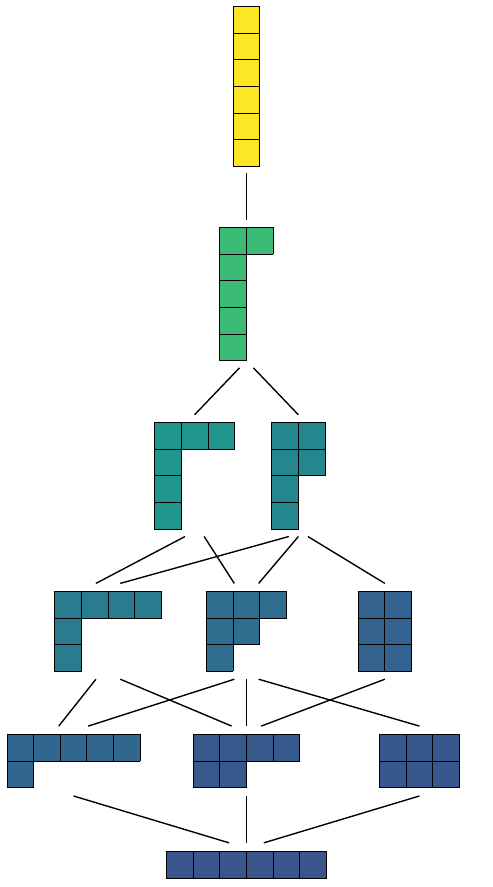}
    \caption{The values of power-sum kernels on $\Sym_6$, corresponding to $\v{z} = (1/2,1/2)$, $(2/3,1/3)$, $(3/4,1/4)$ and $(4/5,1/5)$. The underlying graph is the quotient of the Cayley graph of $\Sym_6$ with vertices partitioned into conjugacy classes, given by Young diagrams.}
    \label{fig:s6-kernels}
\end{figure}

\subsection{Proof of the theorem~\ref{thm:main_thm}} \label{sec:main_thm_proof}

Yaglom's theory \cite{yaglom1961} implies that bi-invariant kernels on compact groups can be described in terms of representation theory.
In particular, any bi-invariant kernel $k: G \x G \to \R$ on a compact group $G$, whose identity element we denote by $e_G$, has the form
\begin{equation}\label{eqn:stationary_kernels}
k(g, h)
=
k(gh^{-1}, e_G)
=
k(e_G, g^{-1}h)
=
\sum_\lambda a_\lambda\chi_\lambda(gh^{-1}),
\end{equation}
where $a_\lambda \geq 0$ and $\sum_\lambda a_\lambda d_\lambda < \infty$, the summation is over all irreducible unitary representations $\pi_\lambda$ of the group $G$, $\chi_\lambda\colon G \to \C$ denotes the character of the representation $\pi_\lambda$, and $d_\lambda$ denotes its dimension.
Moreover, any function of the form~\eqref{eqn:stationary_kernels} is bi-invariant and positive semidefinite (thus a well-defined kernel).

The group $\Sym_n$ is finite, and therefore compact.
The number of its irreducible unitary representations is finite.
It is known that irreducible representations of the group $\Sym_n$ can be indexed by Young diagrams of size $n$.
A Young diagram is the representation of a partition $\lambda$ in form of the table with rows of size $\lambda_1, \lambda_2, \ldots, \lambda_\ell$.
In this case, the number $\ell$ is called the length of the Young diagram, and $n =\sum_{j=1}^\ell \lambda_j$ is called its weight.
It known that the characters $\chi_{\lambda}$ in the case of $\Sym_n$ are real-valued and even integer \cite[Section~13.1, Corollary~1]{serre1977}.

Young diagrams are also closely related to the unitary group $\f{U}(m)$: diagrams of length at most $m$ enumerate the irreducible representations of $\f{U}(m)$.
Moreover, there is a deep connection between the representations of $\Sym_n$ and $\f{U}(m)$, expressed through the Schur--Weyl duality.
The character $s_{\lambda}$ of the irreducible representation of the group $\f{U}(m)$ corresponding to the Young diagram $\lambda$ can be expressed as the Schur polynomial in the eigenvalues $z_1, \ldots, z_m$ of the unitary matrix $\m{Z}$.
Using Weyl's formula \cite[\S 7.15]{stanley1997} it can be expressed by:
\begin{equation}
\label{eqn:schur_det}
s_{\lambda}(\m{Z})
=
s_{\lambda}(z_1, \ldots, z_m) =
\frac{\det \del{z_i^{\lambda_j + m - j}}_{i, j = 1}^m}
{\det \del{z_i^{m - j}}_{i, j = 1}^m},
\end{equation}
where for $j$ greater than the length of $\lambda$ it is assumed that $\lambda_j = 0$, and the denominator divides the numerator, implying that $s_{\lambda}$ is indeed a polynomial in the variables $z_1, \ldots, z_m$.
Note that representations of the group $\f{U}(m)$ can be extended to representations of the ambient general linear group $\f{GL}(m,\C)$, whose character is also given by~\Cref{eqn:schur_det}.

There is another representation for $s_{\lambda}$, given in terms of semistandard Young tableaux, which we will now define.
Let $\lambda$ be a partition.
A semistandard tableau of shape $\lambda$ is a Young diagram corresponding to the partition $\lambda$, filled with natural numbers in such a way that values do not decrease in rows and increase in columns.
Polynomials $s_{\lambda}$ can be described as follows~\cite[\S 7.10]{stanley1997}:
\begin{equation}
\label{eqn:schur_tabluax}
s_\lambda(\v{z})
=
\sum_T \v{z}^T
=
\sum_T \prod_{j=1}^m z_j^{t_j},
\end{equation}
where the summation is over all semistandard tables $T$ of shape $\lambda$ filled with numbers from $1$ to $m$, and $t_j$ denotes the number of cells in the table~$T$ containing the number $j$.

The set of polynomials $s_{\lambda}$, similarly to the set of polynomials $p_{\lambda}$, forms a basis of the space of symmetric polynomials, while the characters $\chi_{\lambda}$ of the group $\Sym_n$ define the elements of the basis change matrix.
More formally, the following holds.

\begin{theorem*}[Frobenius Formula, {\cite[\S4.1]{fulton1991}}]
\begin{equation}
\label{eqn:frobenius}
p_{\mu(g)}(z_1, z_2, \ldots, z_m)  = \sum \chi_\lambda(g) s_\lambda(z_1, z_2, \ldots, z_m),
\end{equation}
where the sum is over all Young diagrams of weight $n$ and length at most $m$, that is, partitions $\lambda$ of $n$ into at most $m$ terms.
\end{theorem*}

We are now ready to prove the theorem~\ref{thm:main_thm}.

Using \eqref{eqn:frobenius}, we write
\begin{equation}
\label{eqn:kernel_schur}
k_{\v{z}}(g,h)
=
p_{\mu(gh^{-1})}(\v{z})
=
\sum \chi_\lambda(gh^{-1}) s_\lambda(z_1, z_2, \ldots, z_m).
\end{equation}
Since $z_j \geq 0$ for $j = 1, \ldots, n$ by assumption, we see from the equality~\eqref{eqn:schur_tabluax} that $s_\lambda(z_1, z_2, \ldots, z_n) \geq 0$.
It follows that $k_{\v{z}}$ is of form~\eqref{eqn:stationary_kernels} and is therefore a bi-invariant covariance on $\Sym_n$.

Finally, by definition of $\mu(gh^{-1})$ we have $\abs[0]{\cbr[0]{\mu_j(gh^{-1}) = r}} = c_r$.
Taking this into account and grouping polynomials by their respective values of $\mu_j$, we get
\begin{equation}
k_{\v{z}}(g,h)
=
p_{\mu(gh^{-1})}(\v{z})
=
p_{\mu_j(gh^{-1})}(\v{z})
=
\prod_{r=1}^n p_r^{c_r}(\v{z})
.
\end{equation}

\section{Sampling power sum Gaussian processes}

Here we discuss ways of sampling Gaussian processes with power sum covariance functions, i.e. computational algorithms for obtaining their (approximate) trajectories.
Let $f$ be a centered Gaussian process on the group $\Sym_n$ with covariance $k\colon \Sym_n \x \Sym_n \to \R$.
Let $\v{f} =\del{f(g_1), \ldots, f(g_{n!})}^{\top}$, where $g_1, \ldots, g_{n!}$ are arbitrarily enumerated elements of $\Sym_n$.
Then $\v{f} \sim \f{N}(\v{0}, \m{K})$, where $\m{K}$ is the covariance matrix of the random vector~$\v{f}$.
Its elements are $\m{K}_{j r} = k(g_j, g_{r})$.
Sampling the process $f$ means sampling the vector $\v{f}$.
Let us write
\begin{align}
\label{eqn:sq_root_repr}
\v{f}
=
\m{K}^{1/2} \v{\varepsilon},
&&
\v{\varepsilon} \sim \f{N}(\v{0}, \m{I}),
\end{align}
where $\m{K}^{1/2}$ is a root of the matrix $\m{K}$ in the sense that $\del{\m{K}^{1/2}}^{\top} \m{K}^{1/2} = \m{K}$, and $\m{I}$ is the identity matrix of size $n!$.

The representation~\eqref{eqn:sq_root_repr} makes it possible to algorithmically sample $f$ by 1)~calculating the root $\m{K}^{1/2}$, 2)~sampling the standard normal vector $\v{\varepsilon} \sim \f{N}(\v{0}, \m{I})$, and 3)~computing the matrix-vector product $\m{K}^{1/2} \v{\varepsilon}$.
Step (2) has computational complexity $O(n!)$, step (3) has computational complexity $O((n!)^2)$, and step (1) has computational complexity $O((n!)^3)$.\footnote{The power $3$ can be reduced by using specialized algorithms such as Strassen algorithm. However, this exponent will always be no less than $2$ and such specialized algorithms are rarely used in practice because of the large constants hidden behind the $O$ notation.}
This ultimately makes this approach untenable for computing.
This is not surprising given its excessive generality and not even relying on bi-invariance in any way.
In what follows, we propose other techniques that leverage bi-invariance to allow, albeit approximately, sampling $f$ in a much more computationally efficient way.

The first such technique is a special case 
of the \emph{generalized random phase Fourier features} method proposed by \cite{azangulov2022} for modeling bi-invariant Gaussian processes on general compact groups.

Let $k\colon \Sym_n \x \Sym_n \to \R$ be a bi-invariant covariance.
Then it is of form~\eqref{eqn:stationary_kernels}, i.e. $k(g, h) = \sum_\lambda a_\lambda \chi_\lambda(g h^{-1})$.
Thus \cite{azangulov2022} gives
\begin{align}
f(g)
\approx
\sum_{r=1}^R
\frac{1}{\sqrt{L}}
\sum_{l=1}^L
w_l^{\lambda_r}
\chi_{\lambda_r}(g u_{l}),
&&
w_l^{\lambda_r} \stackrel{\text{i.i.d.}}{\sim} \f{N}\del{0,\frac{a_{\lambda_r}}{d_{\lambda_r}}},
\end{align}
where $u_{l} \stackrel{\text{i.i.d.}}{\sim} \f{U}(\Sym_n)$ with $\f{U}(\Sym_n)$ denoting the uniform distribution (Haar probability measure) on $\Sym_n$, $d_{\lambda}$ denotes the dimension of the representation~$\pi_{\lambda}$, and $\lambda_1, \ldots, \lambda_R$ correspond to the $R$ representations of the group $\Sym_n$ with the highest value $a_\lambda d_\lambda$.
To apply this approach, one has to be able to find $R$ representations with the largest value $a_\lambda d_\lambda$.
The number of representations of $\Sym_n$  grows as $\del{4n\sqrt{3}}^{-1} e^{\pi\sqrt{2n/3}}$ \cite{erdos1942}, thus it is not clear how to do this in a computationally tractable way.
Because of this we propose another technique for approximate modeling that does not require such a search.

It is similar to the technique from~\cite{azangulov2022} described above, but it is intended for bi-invariant processes on $\Sym_n$ specifically.
To do this, we need some preliminaries.

As before, by $s_\lambda(\m{X})$, where $\m{X} \in \f{GL}(m, \C)$ is a diagonalizable matrix, we denote $s_\lambda(x_1, x_2, \ldots, x_m)$, where $x_j$ are eigenvalues of the matrix $\m{X}$.
By the symbol $\f{N}_{\C^{m \x m}}$ we denote the ensemble of random matrices of size $m \x m$ with complex entries of the form $\f{N}(0, 1/2) + i \f{N}(0, 1/2)$, where all real and imaginary parts are jointly independent. 
Then the following is true~\cite{orlov2002, forrester2009}:
\begin{equation}
\label{eqn:schur_ginibre}
\begin{aligned}
\E_{\m{Z} \sim \f{N}_{\C^{m\times m}}}
s_\lambda(\m{A}\m{Z})
\overline{s_\lambda(\m{A}\m{Z})}
&=
\pi^{-m^2}
\int_{-\infty}^{\infty}
s_\lambda(\m{A}\m{Z})
\overline{s_\lambda(\m{A}\m{Z})}
e^{-\tr(\m{Z}\m{Z}^*)}
\d \m{Z}
\\
&=
\frac{n! \cdot s_\lambda(\m{A}\m{A}^*)}{d_\lambda}.
\end{aligned}
\end{equation}
Also for any character $\chi_\lambda$ of an irreducible representation of $\Sym_n$ we have \cite{azangulov2022}
\begin{equation}
\label{eqn:ch_average}
\frac{\chi_\lambda(gh^{-1})}{d_\lambda}
=
\int_{\Sym_n} \chi_\lambda(g u)\chi_\lambda(h u)
=
\E_{u \sim \f{U}(\Sym_n)} \chi_\lambda(g u)\chi_\lambda(h u).
\end{equation}
Rewriting the right hand side of the formula~\eqref{eqn:kernel_schur} using \eqref{eqn:schur_ginibre} with
\begin{equation}
\m{A}
=
\f{diag}(\sqrt{z_1},\sqrt{z_2},\ldots,\sqrt{z_m})
\end{equation}
and using~\Cref{eqn:ch_average} afterwards, we get that
\begin{equation}
\begin{aligned}
k_{\v{z}}(g h^{-1})
&=
\sum_{\lambda} \chi_\lambda(g h^{-1}) s_\lambda(z_1, z_2, \ldots, z_m)
\\
\label{eqn:sampling_before_rh}
&= 
\sum_\lambda
\frac{d_\lambda^2}{n!}
\E_{u \sim \f{U}(\Sym_n)}
\E_{\m{Z} \sim \f{N}_{\C^{m\times m}}}
\chi_\lambda(g u)
\chi_\lambda(h u)
s_\lambda(\m{A} \m{Z})
\overline{s_\lambda(\m{A} \m{Z})}.
\end{aligned}
\end{equation}

Finally, we need some properties of the Robinson---Schensted algorithm \cite{knuth1998}.
Given a permutation $g\in \Sym_n$, this algorithm constructs in a bijective way a pair of standard Young tableaux $(P(g),Q(g))$ of the same shape $\lambda(g)$.
We omit all the detail details concerning this mapping, but note only that there is an implementation with computational complexity $O(n^{3/2})$~\cite{tiskin2022}.
For $\lambda(g)$ from the Robinson---Schensted algorithm and for $g \sim \f{U}(\Sym_n)$ the equality $\P(\lambda(g) = \lambda) = d_\lambda^2 / n!$ holds.
Therefore, summing over $\lambda$ in the expression~\eqref{eqn:sampling_before_rh} can be rewritten so that
\begin{equation}
k_x(gh^{-1})
=
\E_{\substack{v, u \sim \f{U}(\Sym_n) \\ \m{Z} \sim \f{N}_{\C^{m\times m}}}}
\chi_{\lambda(v)}(gu)
\chi_{\lambda(v)}(hu)
s_{\lambda(v)}(\m{A}\m{Z})
\overline{s_{\lambda(v)}(\m{A}\m{Z})}.
\end{equation}
As a result, we obtain the desired approximation of the following form:
\begin{equation}
f(g)
\approx
\frac{1}{\sqrt{L}}
\sum_{j=1}^L
\del{w_{j, 1}
\operatorname{Re}s_{\lambda(v_j)}(\m{A} \m{Z}_j)+w_{j,2}\operatorname{Im}s_{\lambda(v_j)}(\m{A} \m{Z}_j)}
\chi_{\lambda(v_j)}(g u_j),
\end{equation}
with random coefficients
$
w_{j, 1}\stackrel{\text{i.i.d.}}{\sim} \f{N}(0,1)
$,
$
w_{j, 2} \stackrel{\text{i.i.d.}}{\sim} \f{N}(0,1)
$,
and indices
$
v_j \stackrel{\text{i.i.d.}}{\sim} U(\Sym_n)
$,
$
u_j \stackrel{\text{i.i.d.}}{\sim} U(\Sym_n)
$,
$
\m{Z}_j \stackrel{\text{i.i.d.}}{\sim} \f{N}_{\C^{m\times m}}
$.

\printbibliography

\end{document}